\let\csname equation*\endcsname\relax
\let\csname endequation*\endcsname\relax 
   \definecolor{BLACK}{gray}{0}
   \definecolor{WHITE}{gray}{1}
   \definecolor{RED}{rgb}{1,0,0}
   \definecolor{GREEN}{rgb}{0,1,0}
   \definecolor{BLUE}{rgb}{0,0,1}
   \definecolor{CYAN}{cmyk}{1,0,0,0}
   \definecolor{MAGENTA}{cmyk}{0,1,0,0}
   \definecolor{YELLOW}{cmyk}{0,0,1,0}
  \theoremstyle{definition}
  \newtheorem{defn}{\protect\definitionname}
  \theoremstyle{plain}
  \newtheorem{prop}{\protect\propositionname}
\theoremstyle{plain}
\newtheorem{thm}{\protect\theoremname}
  \theoremstyle{plain}
  \newtheorem{cor}{\protect\corollaryname}
  \providecommand{\definitionname}{Definition}
  \providecommand{\propositionname}{Proposition}
\providecommand{\corollaryname}{Corollary}
\providecommand{\theoremname}{Theorem}
\begin{document}



\title[Generalized $q$-Cramér-Rao inequalities] {On generalized Cramér-Rao inequalities, generalized Fisher informations
and characterizations of generalized $q$-Gaussian distributions\footnote{This is a preprint version that differs from the published version, J. Phys. A: Math. Theor. 45 255303 2012, doi:10.1088/1751-8113/45/25/255303, in minor revisions, pagination and typographics details.} }

\author{J.-F. Bercher}

\ead{jean-francois.bercher@univ-paris-est.fr}


\address{Laboratoire d'informatique Gaspard Monge, UMR 8049  ESIEE-Paris,
Université Paris-Est}

\date{\today}
\begin{abstract}
This paper deals with Cramér-Rao inequalities in the context of nonextensive
statistics and in estimation
theory. It gives characterizations of generalized $q$-Gaussian distributions,
and introduces generalized versions of Fisher information. The contributions
of this paper are (i) the derivation of new extended Cramér-Rao inequalities
for the estimation of a parameter, involving general $q$-moments
of the estimation error, (ii) the derivation of Cramér-Rao inequalities
saturated by generalized $q$-Gaussian distributions, (iii) the definition
of generalized Fisher informations, (iv) the identification and interpretation
of some prior results, and finally, (v) the suggestion of new estimation
methods.

\end{abstract}


\pacs{{02.50.-r}, {05.90.+m}, {89.70.-a}}

\ams{28D20, 94A17, 62B10, 39B62}


\maketitle

\section{Introduction}

It is well known that the Gaussian distribution has a central role
with respect to classical information measures and inequalities. For
instance, the Gaussian distribution maximizes the entropy over all
distributions with the same variance; see \cite[Lemma 5]{dembo_information_1991}.
Similarly, the Cramér-Rao inequality, e.g. \cite[Theorem 20]{dembo_information_1991},
shows that the minimum of the Fisher information over all distributions
with a given variance is attained for the Gaussian distribution. Generalized
$q$-Gaussian distributions arise as the maximum entropy solution
in Tsallis' nonextensive thermostatistics, which is based on the use
of the generalized Tsallis entropy. The Generalized $q$-Gaussian
distributions also appear in other fields, namely as the solution
of non-linear diffusion equations, or as the distributions that saturate
some sharp inequalities in functional analysis. Furthermore, the generalized
$q$-Gaussian distributions form a versatile family that can describe
problems with compact support as well as problems with heavy tailed
distributions. 

Since the standard Gaussian is both a maximum entropy and a minimum
Fisher information distribution over all distributions with a given
variance, a natural question is to find whether this can be extended
to include the case of the generalized $q$-gaussian distribution,
thus improving the information theoretic characterization of these
generalized $q$-Gaussians. This question amounts to look for a definition
of a generalized Fisher information, that should include the standard
one as a particular case, and whose minimum over all distributions
with a given variance is a $q$-Gaussian. More generally, this should
lead to an extension of the Cramér-Rao inequality saturated by $q$-Gaussian
distributions. 

Several extensions of the Fisher information and of the Cramér-Rao
inequality have been proposed in the literature. In particular, the
beautiful work by Lutwak, Yang and Zhang \cite{lutwak_Cramer_2005} gives an extended Fisher
information and a Cramér-Rao inequality saturated by $q$-Gaussian
distributions. In an interesting work, Furuichi \cite{furuichi_maximum_2009}
defines another generalized Fisher information and a Cramér-Rao inequality
saturated by $q$-Gaussian distributions. In this latter contribution,
the statistical expectations are computed with respect to escort distributions.
These escort distributions are one parameter deformed versions of
the original distributions, and are very useful in different formulations
of nonextensive statistics. In the present paper, we will recover
these results and show that the two generalized Fisher informations
above and the associated Cramér-Rao inequalities are actually linked
by a simple transformation. 

In section \ref{sec:Definitions-and-main}, we give the main definitions
and describe the ingredients that are used in the paper. In particular,
we give the definition and describe the importance of generalized
$q$-Gaussians, we define the notion of escort distributions, and
finally give some definitions related to deformed calculus in nonextensive
statistics. 

The Cramér-Rao inequalities indicated above are inequalities characterizing
the probability distribution, and the Fisher information is the Fisher
information of the distribution. Actually, the Fisher information
is defined in a broader context as the information about a parameter
of a parametric family of distributions. In the special case of a
location parameter, it reduces to the Fisher information of the distribution.
The Cramér-Rao inequality appears in the context of estimation theory,
and as it is well known, defines a lower bound on the variance of
any estimator of a parameter. In section \ref{sec:Generalized-Cram=0000E9r-Rao-inequalit},
we describe the problem of estimation, recall the classical Cramér-Rao
inequality, and show that the standard Cramér-Rao inequality can be
extended in two directions. First we consider moments of any order
of the estimation error, and second we use generalized moments computed
with respect to an escort distributions. This lead us to two new Cramér-Rao
inequalities for a general parameter, together with their equality
conditions. In this context, the definitions of the generalized Fisher
informations will pop up very naturally. For a very general definition
of escort distributions, we also recover a general Cramér-Rao inequality
given in a deep paper by Naudts \cite{naudts_estimators_2004}. 

In section \ref{sec:Inequalities-in-the}, we examine the special
case of a translation parameter and show, as an immediate consequence,
that the general results enable to easily recover two Cramér-Rao inequalities
that characterize the generalized $q$-Gaussian distributions. So
doing, we recover the definitions of generalized Fisher informations
previously introduced by Lutwak \textit{et al.} \cite{lutwak_Cramer_2005} and Furuichi \cite{furuichi_maximum_2009}.
Furthermore, we show that the related Cramér-Rao inequalities, which
are similar to those of \cite{lutwak_Cramer_2005} and \cite{furuichi_maximum_2009},
are saturated by the generalized $q$-Gaussians. Finally, in section
\ref{sec:Further-remarks}, we discuss some new estimation rules that
emerge from this setting, and in particular point out a connection to
the MLq-likelihood method that has been introduced recently, c.f.
\cite{ferrari_maximum_2010,hasegawa_properties_2009}.

\section{\label{sec:Definitions-and-main}Definitions and main ingredients}

\subsection{Generalized $q$-Gaussian}

The generalized Gaussian distribution is a family of distributions
which includes the standard Gaussian as a special case. These generalized
Gaussians appear in statistical physics, where they are the maximum
entropy distributions of the nonextensive thermostatistics \cite{tsallis_introduction_2009}.
In this context, these distributions have been observed to present
a significant agreement with experimental data. They are also analytical
solutions of actual physical problems, see \cite{lutz_anomalous_2003,schwaemmle_q-gaussians_2008}
\cite{vignat_isdetection_2009}, \cite{ohara_information_2010},
and are sometimes known as Barenblatt-Pattle functions, following
their identification by \cite{barenblatt_unsteady_1952,pattle_diffusion_1959}.
Let us also mention that the generalized Gaussians are the one dimensional
versions of explicit extremal functions of Sobolev, log-Sobolev or
Gagliardo\textendash{}Nirenberg inequalities on $\mathbb{R}^{n},$
as was shown by \cite{del_pino_best_2002,del_pino_optimal_2003}
for $n\geq2$ and by \cite{agueh_sharp_2008} for $n\geq1$. 
\begin{defn}
 Let $x$ be a random variable on $\mathbb{R}.$ For $\alpha\in(0,\infty),$
$\gamma$ a real positive parameter and $q>1-\alpha,$ the generalized
$q$-Gaussian with scale parameter $\gamma$ has the symmetric probability
density 
\begin{equation}
G_{\gamma}(x)=\begin{cases}
\frac{1}{Z(\gamma)}\left(1-\left(q-1\right)\gamma|x|^{\alpha}\right)_{+}^{\frac{1}{q-1}} & \text{for }q\not=1\\
\frac{1}{Z(\gamma)}\exp\left(-\gamma|x|^{\alpha}\right) & \text{if }q=1
\end{cases}\text{ }\label{eq:qgauss_general}
\end{equation}
where we use the notation $\left(x\right)_{+}=\mbox{max}\left\{ x,0\right\} $,
and where $Z(\gamma)$ is the partition function such that $G_{\gamma}(x)$
integrates to one: 
\begin{equation}
Z(\gamma)=\frac{2}{\alpha}\left(\gamma\right)^{-\frac{1}{\alpha}}\times\begin{cases}
(1-q)^{-\frac{1}{\alpha}}B\left(\frac{1}{\alpha},-\frac{1}{q-1}-\frac{1}{\alpha}\right) & \text{for }1-\alpha<q<1\\
(q-1)^{-\frac{1}{\alpha}}B\left(\frac{1}{\alpha},\frac{1}{q-1}+1\right) & \text{for }q>1\\
\Gamma\left(\frac{1}{\alpha}\right) & \text{if }q=1.
\end{cases}\label{eq:GenPartitionFunction}
\end{equation}
where $B(x,y)$ is the Beta function. 
\end{defn}
For $q>1$, the density has a compact support, while for $q\leq1$
it is defined on the whole real line and behaves as a power distribution
for $|x|\rightarrow\infty.$ Notice that the name generalized Gaussian
is sometimes restricted to the case $q=1$ above. In this case, the
standard Gaussian is recovered with $\alpha=2.$

\subsection{Escort distributions}

The escort distributions are an essential ingredient in the nonextensive
statistics context. Actually, the escort distributions have been introduced
as an operational tool in the context of multifractals, c.f. \cite{chhabra_direct_1989},
\cite{beck_thermodynamics_1993}, with interesting connections with
the standard thermodynamics. Discussion of their geometric properties
can be found in \cite{abe_geometry_2003,ohara_dually_2010}. Escort
distributions also prove useful in source coding, as noticed in \cite{bercher_source_2009}.
They are defined as follows.

If $f(x)$ is a univariate probability density, then its escort distribution
$g(x)$ of order $q$, $q\geq0,$ is defined by
\begin{equation}
g(x)=\frac{f(x)^{q}}{\int f(x)^{q}\mathrm{d}x},\label{eq:escort_f-1}
\end{equation}
provided that Golomb's ``information generating
function''   \cite{golomb_information_1966}
\begin{equation}
M_{q}[f]=\int f(x)^{q}\text{d}x\label{eq:InfoGeneratingFunctionDef}
\end{equation}
is finite.

Given that $g(x)$ is the escort of $f(x)$, we see that $f(x)$ is
itself the escort of $g(x)$ of order $\bar{q}=1/q$. 

Accordingly, the (absolute) generalized $q$-moment of order $p$
is defined by
\begin{equation}
m_{p,q}[f]:=E_{q}\left[|x|^{p}\right]=\int|x|^{p}g(x)\text{d}x=\frac{\int|x|^{p}f(x)^{q}\text{d}x}{\int f(x)^{q}\text{d}x},\label{eq:DefGeneralizedMoment}
\end{equation}
where $E_{q}[.]$ denotes the statistical expectation with respect
to the escort of order $q$. Of course, standard moments are recovered
in the case $q=1.$

\subsection{Deformed functions and algebra}

In Tsallis statistics, it has appeared convenient to use deformed
algebra and calculus, c.f. \cite{nivanen_generalized_2003,borges_possible_2004}.
The $q$-exponential function is defined by
\begin{equation}
\exp_{q}(x):=\left(1+(1-q)x\right)_{+}^{\frac{1}{1-q}},\label{eq:defExpq}
\end{equation}
while its inverse function, the $q$-logarithm is defined by
\begin{equation}
\ln_{q}(x):=\frac{x^{1-q}-1}{1-q}.\label{eq:defLnq}
\end{equation}
When $q$ tends to 1, both quantities reduce to the standard functions
$\exp(x)$ and $\ln(x)$ respectively. In the following, we will use
the notation $\bar{q}=1/q$ that already appeared above in connection
with escort distributions, and the notation $q_{*}=2-q$ that changes
the quantity $(1-q_{*})$ into $(q-1),$ e.g. $\exp_{q_{*}}(x):=\left(1+(q-1)x\right)_{+}^{\frac{1}{q-1}}.$
We note the following expressions for the derivatives of deformed
logarithms: 
\begin{equation}
\frac{\partial}{\partial\theta}\ln_{q}\left(f(x;\theta)\right)=\frac{\frac{\partial}{\partial\theta}f(x;\theta)}{f(x;\theta)}f(x;\theta)^{1-q}\text{ \,\,\,\ and \,\,\,}\frac{\partial}{\partial\theta}\ln_{q_{*}}\left(f(x;\theta)\right)=\frac{\frac{\partial}{\partial\theta}f(x;\theta)}{f(x;\theta)}f(x;\theta)^{q-1}.\label{eq:derivLnq}
\end{equation}

The $q$-product is a deformed version of the standard product such
that standard properties of exponential and logarithm functions still
hold for their deformed versions. The $q$-product is defined by
\begin{equation}
x\otimes_{q}y:=\left(x^{1-q}+y^{1-q}-1\right)^{\frac{1}{1-q}}
\end{equation}
and gives the identities 
\begin{equation}
\ln_{q}(x\otimes_{q}y)=\ln_{q}(x)+\ln_{q}(x)\text{ \,\,\ and\,\,\ }\exp_{q}(x+y)=\exp_{q}(x)\otimes_{q}\exp_{q}(y)\label{eq:qproductidentities}
\end{equation}

\subsection{Fisher information}

The importance of Fisher information as a measure of the information
about a parameter in a distribution is well known, as exemplified
in estimation theory by the Cramér-Rao bound which provides a fundamental
lower bound on the variance of an estimator. The statement of the
standard Cramér-Rao inequality, as well as several extensions, will
be given in section\,\ref{sec:Generalized-Cram=0000E9r-Rao-inequalit}. 

It might be also useful to note that Fisher information is used as
a method of inference and understanding in statistical physics and
biology, as promoted by Frieden \cite{frieden_physicsfisher_2000,frieden_sciencefisher_2004}.
It is also used as a tool for characterizing complex signals or systems,
with applications, e.g. in geophysics, in biology, in reconstruction
or in signal processing. Information theoretic inequalities involving
Fisher information have attracted lot of attention for characterizing
statistical systems through their localization in information planes,
e.g. the Fisher-Shannon information plane \cite{vignat_analysis_2003,romera_Fisher-Shannon_2004}
or the Cramér-Rao information plane \cite{dehesa_Cramer-Rao_2006}.
 
\begin{defn}
Let $f(x;\theta)$ denote a probability density defined over a subset
$X$ of $\mathbb{R}$, and $\theta\in\Theta$  a real parameter.
Suppose that $f(x;\theta)$ is differentiable with respect to $\theta$.
Then, the Fisher information in the density $f$ about the parameter
$\theta$ is defined as
\begin{equation}
I_{2,1}[f,\theta]=\int_{X}\left(\frac{\partial\ln f(x;\theta)}{\partial\theta}\right)^{2}f(x;\theta)\mathrm{d}x.\label{eq:GenFisher-1}
\end{equation}
When $\theta$ is the location parameter, i.e. $f(x;\theta)=f(x-\theta)$,
the Fisher information, expressed at $\theta=0,$ becomes a characteristic
of the distribution: the Fisher information of the distribution:
\begin{equation}
I_{2,1}[f]=\int_{X}\left(\frac{\text{d\,}\ln f(x)}{\text{d}x}\right)^{2}f(x)\mathrm{d}x.
\end{equation}
 The meaning of the subscripts in the definition will appear in the
following. 
\end{defn}

\section{\label{sec:Generalized-Cram=0000E9r-Rao-inequalit}Generalized Cramér-Rao
inequalities}

In this section, we begin by recalling the context of estimation,
the role of Fisher information and the statement of the standard Cramér-Rao
theorem. Then, we show how this can be extended to higher moments,
and to generalized moments computed with respect to an escort distribution.

\subsection{The standard Cramér-Rao inequality}

The problem of estimation, in a few words, consists in finding a function
$\hat{\theta}(x)$ of the data $x,$ that approaches the unknown value
of a characteristic parameter $\theta$ (e.g. location, scale or shape
parameter) of the probability density of these data. 

A standard statement of the Cramér-Rao inequality is recalled now.
\begin{prop}
{[}Cramér-Rao inequality{]} Let $f(x;\theta)$ be a univariate probability
density function defined over a subset $X$ of $\mathbb{R}$, and
$\theta\in\Theta$ a parameter of the density. If $f(x)$ is continuously
differentiable with respect to $\theta,$ satisfies some regularity
conditions that enable to interchange integration with respect to
$x$ and differentiation with respect to $\theta,$ then for any estimator
$\hat{\theta}(x)$ of the parameter $\theta,$ 
\begin{gather}
E\left[\left|\hat{\theta}(x)-\theta\right|^{2}\right]I_{2,1}[f;\theta]\geq\left|1+\frac{\partial}{\partial\theta}E\left[\hat{\theta}(x)-\theta\right]\right|^{2}.\label{eq:StandardCramerRao}
\end{gather}
When the estimator is unbiased, that is if $E\left[\hat{\theta}(x)\right]=\theta,$
then the inequality reduces to 
\begin{equation}
E\left[\left|\hat{\theta}(x)-\theta\right|^{2}\right]I_{2,1}[f]\geq1.
\end{equation}
 
\end{prop}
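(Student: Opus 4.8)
The plan is to prove the Cram\'er-Rao inequality \eqref{eq:StandardCramerRao} by the classical Cauchy-Schwarz argument applied to a suitably chosen pair of functions, and then specialize to the unbiased case. First I would write down the basic identity coming from the fact that $f(x;\theta)$ integrates to one: differentiating $\int_X f(x;\theta)\,\mathrm{d}x = 1$ with respect to $\theta$ and using the regularity assumption to swap $\partial_\theta$ and $\int$, one gets $\int_X \partial_\theta f(x;\theta)\,\mathrm{d}x = 0$, equivalently $E\left[\partial_\theta \ln f(x;\theta)\right]=0$. Next I would differentiate the relation $\int_X \hat\theta(x)\, f(x;\theta)\,\mathrm{d}x = E[\hat\theta(x)]$ with respect to $\theta$, again interchanging differentiation and integration, to obtain
\begin{equation}
\frac{\partial}{\partial\theta}E[\hat\theta(x)] = \int_X \hat\theta(x)\,\frac{\partial f(x;\theta)}{\partial\theta}\,\mathrm{d}x = E\left[\hat\theta(x)\,\frac{\partial \ln f(x;\theta)}{\partial\theta}\right].
\end{equation}
Combining these two facts and subtracting $\theta$ times the first from the second yields the key covariance identity
\begin{equation}
E\left[\bigl(\hat\theta(x)-\theta\bigr)\,\frac{\partial \ln f(x;\theta)}{\partial\theta}\right] = 1 + \frac{\partial}{\partial\theta}E\left[\hat\theta(x)-\theta\right].
\end{equation}

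The second step is to apply the Cauchy-Schwarz inequality in $L^2(f\,\mathrm{d}x)$ to the two functions $u(x) = \hat\theta(x)-\theta$ and $v(x) = \partial_\theta \ln f(x;\theta)$. This gives
\begin{equation}
\left|E\left[\bigl(\hat\theta(x)-\theta\bigr)\,\frac{\partial \ln f(x;\theta)}{\partial\theta}\right]\right|^2 \leq E\left[\bigl|\hat\theta(x)-\theta\bigr|^2\right]\; E\left[\left(\frac{\partial \ln f(x;\theta)}{\partial\theta}\right)^2\right],
\end{equation}
and the right-hand factor is exactly $I_{2,1}[f;\theta]$ by definition \eqref{eq:GenFisher-1}. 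Inserting the covariance identity from the first step into the left-hand side produces \eqref{eq:StandardCramerRao}. The unbiased case $E[\hat\theta(x)] = \theta$ makes $E[\hat\theta(x)-\theta] \equiv 0$ as a function of $\theta$, so its $\theta$-derivative vanishes, the right-hand side collapses to $1$, and in the location-parameter case $I_{2,1}[f;\theta]$ reduces to $I_{2,1}[f]$ as noted after \eqref{eq:GenFisher-1}.

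I do not anticipate a genuine obstacle here: this is the textbook Cram\'er-Rao argument and the only subtle points are the interchanges of differentiation and integration, which are explicitly granted by the ``regularity conditions'' hypothesis in the statement, and the implicit assumption that the variance $E[|\hat\theta(x)-\theta|^2]$ and the Fisher information are finite (otherwise the inequality is vacuous). The one place warranting a sentence of care is the passage from $\int_X \hat\theta(x)\,\partial_\theta f\,\mathrm{d}x$ to an expectation of a product with $\partial_\theta \ln f$: this uses $\partial_\theta f = f\,\partial_\theta \ln f$, valid wherever $f>0$, and one restricts all integrals to the support of $f$ so that division by $f$ is legitimate. It is also worth remarking, for later use in the paper's generalizations, that equality in Cauchy-Schwarz holds iff $\partial_\theta \ln f(x;\theta)$ is proportional (as a function of $x$, for each fixed $\theta$) to $\hat\theta(x)-\theta$, which is the usual characterization of efficiency and foreshadows the equality conditions discussed in the generalized setting.
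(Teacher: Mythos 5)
Your proof is correct and is essentially the paper's own argument: the paper recalls this Proposition without a standalone proof and later obtains it as the $q=1$, $\alpha=\beta=2$ special case of Theorem~\ref{prop:[Generalized-Cram=0000E9r-Rao-inequali}, whose proof is exactly your scheme --- differentiate the bias under the integral sign to get the covariance identity (the $q=1$ instance of (\ref{eq:UnPlusdB})) and then apply H\"older's inequality, which for $\alpha=\beta=2$ is Cauchy--Schwarz. Your remarks on the interchange of differentiation and integration and on the equality condition match the paper's hypotheses and its discussion of efficiency.
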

The estimator is said efficient if it is unbiased and saturates the
inequality. This can happen if the probability density and the estimator
satisfy $\frac{\partial}{\partial\theta}\ln f(x;\theta)=k(\theta)\left(\hat{\theta}(x)-\theta\right).$

\subsection{Generalized Cramér-Rao inequalities for higher moments and escort
distributions }

The Fisher information is usually defined as the second order moment
of the score function, the derivative of the log-likelihood, but this
definition can be extended to other moments, leading to a generalized
version of the Cramér-Rao inequality. This extension, which seems
not well known, can be traced back to Barakin \cite[Corollary 5.1]{barankin_locally_1949}.
This generalized Fisher information, together with the extension of
the Cramér-Rao inequality, has also been exhibited by Vajda \cite{vajda_-divergence_1973}
as a limit of a $\chi^{\alpha}$-divergence.  We will recover this
general Cramér-Rao inequality, as well as the standard one, as a particular
case of our new $q$-Cramér-Rao inequalities. The main idea here is
to compute the bias, or a moment of the error, with respect to an
escort distribution $g(x;\theta)$ of $f(x;\theta)$ instead of the
initial distribution. If we first consider the $q$-bias defined by
$B_{q}(\theta):=E_{q}\left[\hat{\theta}(x)-\theta\right]$, we have
the following general statement
\begin{thm}
\label{prop:[Generalized-Cram=0000E9r-Rao-inequali}{[}Generalized
$q$-Cramér-Rao inequality{]} - Let $f(x;\theta)$ be a univariate
probability density function defined over a subset $X$ of $\mathbb{R}$,
and $\theta\in\Theta$ a parameter of the density. Assume that $f(x;\theta)$
is a jointly measurable function of $x$ and $\theta,$ is integrable
with respect to $x$, is absolutely continuous with respect to $\theta,$
and that the derivative with respect to $\theta$ is locally integrable.
Assume also that $q>0$ and that $M_{q}[f;\theta]$ is finite. For
any estimator $\hat{\theta}(x)$ of $\theta$, we have 
\begin{gather}
E\left[\left|\hat{\theta}(x)-\theta\right|^{\alpha}\right]^{\frac{1}{\alpha}}I_{\beta,q}[f;\theta]^{\frac{1}{\beta}}\geq\left|1+\frac{\partial}{\partial\theta}E_{q}\left[\hat{\theta}(x)-\theta\right]\right|\label{eq:GeneralizedCramerRao}
\end{gather}
with $\alpha$ and $\beta$ Hölder conjugates of each other, i.e.
$\alpha^{-1}+\beta^{-1}=1,$ $\alpha\geq1$, and where the quantity
\begin{alignat}{1}
I_{\beta,q}[f;\theta] & =E\left[\left|\frac{f(x;\theta)^{q-1}}{M_{q}[f;\theta]}\,\frac{\partial}{\partial\theta}\ln\left(\frac{f(x;\theta)^{q}}{M_{q}[f;\theta]}\right)\right|^{\beta}\right]=\left(\frac{q}{M_{q}[f;\theta]^{\frac{1}{q}}}\right)^{\beta}E\left[\left|\frac{\partial}{\partial\theta}\ln_{q*}\left(\frac{f(x;\theta)}{M_{q}[f;\theta]^{\frac{1}{q}}}\right)\right|^{\beta}\right]\label{eq:GeneralizedFisher}
\end{alignat}
is the generalized Fisher information of order $(\beta,q)$ on the
parameter $\theta.$ The equality case is obtained if
\begin{equation}
\frac{q}{M_{q}[f;\theta]^{\frac{1}{q}}}\,\frac{\partial}{\partial\theta}\ln_{q*}\left(\frac{f(x;\theta)}{M_{q}[f;\theta]}\right)=c(\theta)\,\mathrm{sign}\left(\hat{\theta}(x)-\theta\right)\left|\hat{\theta}(x)-\theta\right|^{\alpha-1},\label{eq:CaseOfEqualityInCR}
\end{equation}
with $c(\theta)>0.$
\end{thm}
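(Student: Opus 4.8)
The plan is to imitate the classical derivation of the Cram\'er--Rao bound, but with the escort density $g(x;\theta)=f(x;\theta)^{q}/M_{q}[f;\theta]$ taking the role normally played by $f$; recall that $g(\cdot\,;\theta)$ is a bona fide probability density, $\int_{X}g(x;\theta)\,\mathrm{d}x=1$. The starting point is the identity $E_{q}[\hat{\theta}(x)-\theta]=\int_{X}(\hat{\theta}(x)-\theta)\,g(x;\theta)\,\mathrm{d}x$. Differentiating the product under the integral sign, the piece coming from $\partial_{\theta}(\hat{\theta}(x)-\theta)=-1$ contributes $-\int_{X}g\,\mathrm{d}x=-1$, so that
\[
1+\frac{\partial}{\partial\theta}E_{q}\!\left[\hat{\theta}(x)-\theta\right]=\int_{X}\bigl(\hat{\theta}(x)-\theta\bigr)\,\frac{\partial}{\partial\theta}g(x;\theta)\,\mathrm{d}x .
\]
Writing $g=f\cdot\bigl(f^{q-1}/M_{q}\bigr)$ and $\partial_{\theta}g=g\,\partial_{\theta}\ln g$, the right-hand side becomes $\int_{X}\bigl(\hat{\theta}(x)-\theta\bigr)\,S(x;\theta)\,f(x;\theta)\,\mathrm{d}x$, i.e. the expectation against the probability measure $f(x;\theta)\,\mathrm{d}x$ of the product of $\hat{\theta}(x)-\theta$ and the generalized score $S(x;\theta):=\dfrac{f(x;\theta)^{q-1}}{M_{q}[f;\theta]}\,\dfrac{\partial}{\partial\theta}\ln\!\dfrac{f(x;\theta)^{q}}{M_{q}[f;\theta]}$.

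The next step is the triangle inequality followed by H\"older's inequality with conjugate exponents $\alpha,\beta$ ($\alpha^{-1}+\beta^{-1}=1$, $\alpha\ge1$, with $\beta=\infty$ when $\alpha=1$):
\[
\left|1+\frac{\partial}{\partial\theta}E_{q}\!\left[\hat{\theta}(x)-\theta\right]\right|\;\le\;E\!\left[\bigl|\hat{\theta}(x)-\theta\bigr|^{\alpha}\right]^{1/\alpha}\;E\!\left[\,\bigl|S(x;\theta)\bigr|^{\beta}\right]^{1/\beta},
\]
and the second factor is exactly $I_{\beta,q}[f;\theta]^{1/\beta}$ by the first expression in (\ref{eq:GeneralizedFisher}); this is (\ref{eq:GeneralizedCramerRao}). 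For the equality case, H\"older's inequality is an equality precisely when $|\hat{\theta}(x)-\theta|^{\alpha}$ is proportional, $f$-almost everywhere, to $|S(x;\theta)|^{\beta}$, and the triangle inequality is an equality precisely when $\hat{\theta}(x)-\theta$ and $S(x;\theta)$ have the same sign; since $\alpha/\beta=\alpha-1$, these two statements combine into $S(x;\theta)=c(\theta)\,\mathrm{sign}(\hat{\theta}(x)-\theta)\,|\hat{\theta}(x)-\theta|^{\alpha-1}$ with $c(\theta)>0$. Rewriting $S(x;\theta)$ through the deformed-logarithm derivative identities (\ref{eq:derivLnq}) then yields both the second ($\ln_{q_{*}}$) form of $I_{\beta,q}$ in (\ref{eq:GeneralizedFisher}) and the equality condition (\ref{eq:CaseOfEqualityInCR}).

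I expect the one genuinely delicate step — the main obstacle — to be the justification of the interchange of $\partial/\partial\theta$ and $\int_{X}\mathrm{d}x$ under the weak hypotheses assumed (joint measurability in $(x,\theta)$, absolute continuity in $\theta$, local integrability of $\partial_{\theta}f$, finiteness of $M_{q}[f;\theta]$). The natural route is to use absolute continuity to write $g(x;\theta)-g(x;\theta_{0})=\int_{\theta_{0}}^{\theta}\partial_{s}g(x;s)\,\mathrm{d}s$, multiply by $\hat{\theta}(x)-\theta$, integrate in $x$, apply Fubini's theorem to exchange the $x$- and $s$-integrations, and then differentiate the resulting function of $\theta$; along the way one must also control the $\theta$-derivative of the normalization $M_{q}[f;\theta]$ and handle the zero set of $f$, which is the real point of subtlety when $q<1$. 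Once the interchange is secured, the remainder is just the single H\"older estimate above together with the algebraic rewriting via (\ref{eq:derivLnq}); for a sufficiently general notion of escort, the very same scheme reproduces the Cram\'er--Rao inequality of Naudts mentioned earlier.
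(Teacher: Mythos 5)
Your proposal is correct and follows essentially the same route as the paper: differentiate the $q$-bias under the integral sign, split the escort density as $g=f\cdot(f^{q-1}/M_{q})$ so that H\"older's inequality with exponents $\alpha,\beta$ applied against the measure $f\,\mathrm{d}x$ produces exactly the two factors in (\ref{eq:GeneralizedCramerRao}), and obtain the equality condition from the saturation of H\"older together with the sign constraint. The only difference is that you explicitly flag and sketch the justification of the derivative--integral interchange, which the paper dispatches by simply invoking its stated regularity hypotheses.
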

Observe that in the case $q=1,$ $M_{1}[f;\theta]=1$ and the deformed
logarithm reduces to the standard one. Immediately, we obtain the
extended Barakin-Vajda Cramér-Rao inequality in the $q=1$ case, as
well as the standard Cramér-Rao inequality (\ref{eq:StandardCramerRao})
when $q=1$ and $\alpha=\beta=2$. 

\begin{cor}
{[}Barakin-Vajda Cramér-Rao inequality{]} - Under the same hypotheses
as in Theorem\,\ref{prop:[Generalized-Cram=0000E9r-Rao-inequali},
we have 
\begin{equation}
E\left[\left|\hat{\theta}(x)-\theta\right|^{\alpha}\right]^{\frac{1}{\alpha}}I_{\beta,1}[f;\theta]^{\frac{1}{\beta}}\geq\left|1+\frac{\partial}{\partial\theta}E\left[\hat{\theta}(x)-\theta\right]\right|\label{eq:GeneralizedCramerRao_qequal1}
\end{equation}
with 
\begin{equation}
I_{\beta,1}[f;\theta]=E\left[\left|
\frac{\partial}{\partial\theta}\ln\left(f(x;\theta)\right)
\right|^{\beta}
\right]
\end{equation}
and equality if\textup{ $\frac{\partial}{\partial\theta}\ln\left(f(x;\theta)\right)=c(\theta)\mathrm{sign}\left(\hat{\theta}(x)-\theta\right)\left|\hat{\theta}(x)-\theta\right|^{\alpha-1}.$} 
\end{cor}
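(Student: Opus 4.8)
The plan is to obtain this statement as the immediate specialization of Theorem~\ref{prop:[Generalized-Cram=0000E9r-Rao-inequali} to the value $q=1$; the hypotheses of the corollary are exactly those of the theorem with $q=1>0$, so nothing new needs to be assumed. First I would record the three simplifications that occur at $q=1$: (i) the information generating function becomes $M_{1}[f;\theta]=\int f(x;\theta)\,\mathrm{d}x=1$ by normalization; (ii) the escort distribution of order $q=1$ defined in (\ref{eq:escort_f-1}) coincides with $f(x;\theta)$ itself, so $E_{q}[\cdot]$ collapses to the ordinary expectation $E[\cdot]$ and $B_{1}(\theta)=E[\hat{\theta}(x)-\theta]$; (iii) since $q_{*}=2-q=1$, the deformed logarithm $\ln_{q_{*}}$ passes continuously to the natural logarithm $\ln$.

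With these substitutions the general objects of the theorem become the claimed ones. In the definition (\ref{eq:GeneralizedFisher}) of $I_{\beta,q}[f;\theta]$ the prefactor $f(x;\theta)^{q-1}/M_{q}[f;\theta]$ becomes $f(x;\theta)^{0}/1=1$ and the argument of the logarithm becomes $f(x;\theta)^{q}/M_{q}[f;\theta]=f(x;\theta)$, so that
\begin{equation}
I_{\beta,1}[f;\theta]=E\left[\left|\frac{\partial}{\partial\theta}\ln f(x;\theta)\right|^{\beta}\right],
\end{equation}
which is the quantity in the corollary. Inequality (\ref{eq:GeneralizedCramerRao}), with $E_{q}$ replaced by $E$, is then exactly (\ref{eq:GeneralizedCramerRao_qequal1}); and the equality condition (\ref{eq:CaseOfEqualityInCR}) reduces, since the left-hand prefactor is $1$ and the argument of $\ln_{q_{*}}$ is $f(x;\theta)$, to $\frac{\partial}{\partial\theta}\ln f(x;\theta)=c(\theta)\,\mathrm{sign}(\hat{\theta}(x)-\theta)\,|\hat{\theta}(x)-\theta|^{\alpha-1}$, as stated.

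Since the result is a pure specialization, there is essentially no obstacle; the only point deserving a line of care is that nothing degenerates as $q\to1$ (it does not: $M_{1}=1$ is finite and $\ln_{q_{*}}\to\ln$, as already noted before the statement). For completeness one could instead give a self-contained derivation not invoking the theorem, by repeating its argument in the simpler $q=1$ setting: differentiate $E[\hat{\theta}(x)-\theta]=B_{1}(\theta)$ under the integral sign, use $\int\partial_{\theta}f(x;\theta)\,\mathrm{d}x=0$ to write $1+\frac{\partial}{\partial\theta}E[\hat{\theta}(x)-\theta]=\int(\hat{\theta}(x)-\theta)\,\frac{\partial}{\partial\theta}\ln f(x;\theta)\,f(x;\theta)\,\mathrm{d}x$, then split $f=f^{1/\alpha}f^{1/\beta}$ and apply H\"older's inequality with exponents $\alpha,\beta$ to the factors $(\hat{\theta}(x)-\theta)f(x;\theta)^{1/\alpha}$ and $\frac{\partial}{\partial\theta}\ln f(x;\theta)\,f(x;\theta)^{1/\beta}$; the equality case of H\"older reproduces the stated saturation condition. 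The main technical care in this alternative route is justifying the interchange of differentiation and integration, which is precisely what the regularity assumptions of Theorem~\ref{prop:[Generalized-Cram=0000E9r-Rao-inequali} are designed to supply.
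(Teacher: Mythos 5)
Your proposal is correct and follows exactly the paper's route: the paper obtains this corollary as the immediate specialization $q=1$ of Theorem~\ref{prop:[Generalized-Cram=0000E9r-Rao-inequali}, noting precisely that $M_{1}[f;\theta]=1$ and that the deformed logarithm reduces to the standard one. Your optional self-contained H\"older argument is just the theorem's own proof rewritten at $q=1$, so it adds nothing beyond the specialization but is also sound.
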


This inequality generalizes the standard $\alpha=2$ Cramér-Rao inequality
to moments of any order $\alpha>1.$ 
\begin{proof}
{[}of Theorem \ref{prop:[Generalized-Cram=0000E9r-Rao-inequali}{]}
Consider the derivative of the $q$-bias
\begin{equation}
\frac{\partial}{\partial\theta}B_{q}(\theta)=\frac{\partial}{\partial\theta}\int_{X}\left(\hat{\theta}(x)-\theta\right)\frac{f(x;\theta)^{q}}{M_{q}[f;\theta]}\text{d}x.
\end{equation}
The regularity conditions in the statement of the theorem enable to
interchange integration with respect to $x$ and differentiation with
respect to $\theta,$ so that
\begin{align*}
\frac{\partial}{\partial\theta}\int_{X}\left(\hat{\theta}(x)-\theta\right)\frac{f(x;\theta)^{q}}{M_{q}[f;\theta]}\text{d}x & =-\int_{X}\frac{f(x;\theta)^{q}}{M_{q}[f;\theta]}\text{d}x\\
 & +\int_{X}\left(\hat{\theta}(x)-\theta\right)\left[q\frac{\frac{\partial}{\partial\theta}f(x;\theta)}{f(x;\theta)}-\frac{\frac{\partial}{\partial\theta}M_{q}[f;\theta]}{M_{q}[f;\theta]}\right]\frac{f(x;\theta)^{q-1}}{M_{q}[f;\theta)]}f(x;\theta)\,\text{d}x,
\end{align*}
or, since the first term on the right is equal to -1 and since the
term in bracket can be written as the derivative of the logarithm
of the escort distribution of $f(x;\theta)$,
\begin{equation}
1+\frac{\partial}{\partial\theta}B_{q}(\theta)=\int_{X}\left(\hat{\theta}(x)-\theta\right)\,\frac{\partial}{\partial\theta}\ln\left(\frac{f(x;\theta)^{q}}{M_{q}[f;\theta]}\right)\,\frac{f(x;\theta)^{q-1}}{M_{q}[f;\theta]}f(x;\theta)\,\text{d}x.\label{eq:UnPlusdB}
\end{equation}
Consider the absolute value of the integral above, which is less than
the integral of the absolute value of the integrand. By the Hölder
inequality, with $\alpha>1$ and $\beta$ its Hölder conjugate, we
then have
\begin{equation}
\left|1+\frac{\partial}{\partial\theta}B_{q}(\theta)\right|\leq\left(\int_{X}\left|\hat{\theta}(x)-\theta\right|^{\alpha}\, f(x;\theta)\,\text{d}x\right)^{\frac{1}{\alpha}}\,\left(\int_{X}\left|\frac{\partial}{\partial\theta}\ln\left(\frac{f(x;\theta)^{q}}{M_{q}[f;\theta]}\right)\,\frac{f(x;\theta)^{q-1}}{M_{q}[f;\theta]}\right|^{\beta}f(x;\theta)\,\text{d}x\right)^{\frac{1}{\beta}}
\end{equation}
which is the generalized Cramér-Rao inequality (\ref{eq:GeneralizedCramerRao}).
By elementary calculations, we can identify that the generalized Fisher
information above can also be expressed as the derivative of the $q_{*}$-logarithm,
as indicated in the right side of (\ref{eq:GeneralizedFisher}). Finally,
the case of equality follows from the condition of equality in the
Hölder inequality, and from the requirement that the integrand in
(\ref{eq:UnPlusdB}) is non negative: this gives
\begin{equation}
\left|\frac{q}{M_{q}[f;\theta]^{\frac{1}{q}}}\,\frac{\partial}{\partial\theta}\ln_{q*}\left(\frac{f(x;\theta)}{M_{q}[f;\theta]^{\frac{1}{q}}}\right)\right|^{\beta}=k(\theta)\left|\hat{\theta}(x)-\theta\right|^{\alpha}\,\,\,\text{ and \,\,\,}\left(\hat{\theta}(x)-\theta\right)\frac{\partial}{\partial\theta}\ln_{q*}\left(\frac{f(x;\theta)}{M_{q}[f;\theta]^{\frac{1}{q}}}\right)\geq0,
\end{equation}
which can be combined into the single condition (\ref{eq:CaseOfEqualityInCR}),
with $c(\theta)=k(\theta)^{\frac{1}{\beta}}>0.$
\end{proof}
By the properties of escort distributions, we can also obtain an inequality
that involves the $q$-moment of the error $\left(\hat{\theta}(x)-\theta\right)$
instead of the standard moment. Indeed, if $g(x;\theta)$ denotes
the escort distribution of $f(x,\theta)$ of order $q,$ then, as
already mentioned, $f(x;\theta)$ is the escort of order $\bar{q}$
of $g(x;\theta)$, and 
\begin{equation}
f(x;\theta)=\frac{g(x;\theta)^{\bar{q}}}{N_{q}[g;\theta]},
\end{equation}
with $N_{q}[g;\theta]=\int_{X}g(x;\theta)^{\bar{q}}\text{d}x=M_{q}[f;\theta]^{-\bar{q}}.$ 

With these notations, we see that the expectation with respect to
$f$ is the $\bar{q}$-expectation with respect to $g$, and that
the $q$-expectation with respect to $f$ is simply the standard expectation
with respect to $g$. On the other hand, we also have a simple property
that links the deformed logarithms of orders $q_{*}$ and $\bar{q}$:
\begin{prop}
Let $b>0,$ $a=b^{q}.$ With $q_{*}=2-q$ and $\bar{q}=1/q$, the
following equality holds:
\begin{equation}
\ln_{\bar{q}}(a)=q\ln_{q_{*}}(b).\label{eq:relationLnq}
\end{equation}
\end{prop}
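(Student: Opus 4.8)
The plan is to prove the identity by direct substitution into the definition~(\ref{eq:defLnq}) of the deformed logarithm; I do not expect any genuine obstacle, only a short bookkeeping check on exponents, so the ``hard part'' here is essentially trivial.

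First I would expand the left-hand side of~(\ref{eq:relationLnq}). By~(\ref{eq:defLnq}) taken with parameter $\bar{q}=1/q$,
\[
\ln_{\bar{q}}(a)=\frac{a^{1-\bar{q}}-1}{1-\bar{q}}.
\]
Then I substitute $a=b^{q}$ and use $1-\bar{q}=1-\frac{1}{q}=\frac{q-1}{q}$; the numerator becomes $a^{1-\bar{q}}-1=b^{q(1-1/q)}-1=b^{q-1}-1$, so that
\[
\ln_{\bar{q}}(a)=\frac{b^{q-1}-1}{(q-1)/q}=q\,\frac{b^{q-1}-1}{q-1}.
\]

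Next I would expand the right-hand side. Again by~(\ref{eq:defLnq}), this time with parameter $q_{*}=2-q$, one has $1-q_{*}=q-1$, hence
\[
\ln_{q_{*}}(b)=\frac{b^{1-q_{*}}-1}{1-q_{*}}=\frac{b^{q-1}-1}{q-1}.
\]
Comparing the last two displays yields $\ln_{\bar{q}}(a)=q\,\ln_{q_{*}}(b)$, which is precisely~(\ref{eq:relationLnq}). The only point deserving a remark is the degenerate value $q=1$: there $\bar{q}=q_{*}=1$ and $a=b$, and both sides reduce in the limit to the ordinary logarithm $\ln b$, so the identity persists by continuity. In short, the whole content of the proof is the cancellation $q\cdot(1-1/q)=q-1$ together with $1-(2-q)=q-1$, which make the two exponents of $b$ coincide; keeping the factor $q$ and the sign of $q-1$ straight is the entirety of the work.
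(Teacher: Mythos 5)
Your computation is correct and is exactly the ``direct verification'' that the paper itself invokes without writing out: both sides reduce to $q\,(b^{q-1}-1)/(q-1)$ once one notes $q(1-\bar{q})=q-1=1-q_{*}$. Nothing further is needed.
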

\begin{proof}
By direct verification.
\end{proof}
In particular, we note that with $a=g(x;\theta)=b^{q}=\frac{f(x;\theta)^{q}}{M_{q}[f;\theta]},$
we have 
\begin{equation}
\ln_{\bar{q}}(g)=q\ln_{q_{*}}(f(x;\theta)/M_{q}[f;\theta]^{\frac{1}{q}}).
\end{equation}

With these elements, the simple expression of the extended Cramér-Rao
inequality (\ref{eq:GeneralizedCramerRao}) in terms of the escort
$g(x;\theta)$ of $f(x,\theta)$ yields the following corollary. 
\begin{cor}
\label{cor2}{[}Generalized escort-q-Cramér-Rao inequality{]} - Under
the same hypotheses as in Theorem\,\ref{prop:[Generalized-Cram=0000E9r-Rao-inequali},
we have
\begin{gather}
E_{\bar{q}}\left[\left|\hat{\theta}(x)-\theta\right|^{\alpha}\right]^{\frac{1}{\alpha}}\bar{I}_{\beta,\bar{q}}[g;\theta]^{\frac{1}{\beta}}\geq\left|1+\frac{\partial}{\partial\theta}E\left[\hat{\theta}(x)-\theta\right]\right|\label{eq:GeneralizedCramerRao-1}
\end{gather}
with $\alpha$ and $\beta$ Hölder conjugates of each other, i.e.
$\alpha^{-1}+\beta^{-1}=1,$ $\alpha\geq1$, and where the quantity
\begin{equation}
\bar{I}_{\beta,\bar{q}}[g;\theta]=\left(N_{q}[g;\theta]\right)^{\beta}\, E_{\bar{q}}\left[\left|g(x;\theta)^{1-\bar{q}}\,\frac{\partial}{\partial\theta}\ln\left(g\right)\right|^{\beta}\right]=\left(N_{q}[g;\theta]\right)^{\beta}E_{\bar{q}}\left[\left|\frac{\partial}{\partial\theta}\ln_{\bar{q}}\left(g(x;\theta\right)\right|^{\beta}\right]\label{eq:GeneralizedFisher-1}
\end{equation}
is the generalized Fisher information of order $(\beta,q)$ on the
parameter $\theta.$ The equality case is obtained if 
\begin{equation}
\frac{\partial}{\partial\theta}\ln_{\bar{q}}\left(g(x;\theta\right)=c(\theta)\mathrm{sign}\left(\hat{\theta}(x)-\theta\right)\left|\hat{\theta}(x)-\theta\right|^{\alpha-1}.\label{eq:CaseofEqualityInCR2}
\end{equation}

\end{cor}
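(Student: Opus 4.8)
The plan is to deduce this statement directly from Theorem~\ref{prop:[Generalized-Cram=0000E9r-Rao-inequali} by rewriting inequality (\ref{eq:GeneralizedCramerRao}) through the escort change of variables, using the identities collected immediately above the statement. First I would set up the dictionary between quantities attached to $f(x;\theta)$ and those attached to its escort $g(x;\theta)$ of order $q$. Since $f=g^{\bar q}/N_q[g;\theta]$ with $N_q[g;\theta]=M_q[f;\theta]^{-\bar q}$, the plain expectation with respect to $f$ is the $\bar q$-expectation with respect to $g$, while the $q$-expectation with respect to $f$ is the plain expectation with respect to $g$. Consequently $E\big[|\hat\theta(x)-\theta|^{\alpha}\big]=E_{\bar q}\big[|\hat\theta(x)-\theta|^{\alpha}\big]$ and $B_q(\theta)=E_q[\hat\theta(x)-\theta]=E[\hat\theta(x)-\theta]$, so the left-hand moment factor and the right-hand bias term of (\ref{eq:GeneralizedCramerRao}) become their counterparts in (\ref{eq:GeneralizedCramerRao-1}) with no real work.

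The one calculation worth carrying out carefully is the identification of the generalized Fisher information. Starting from the $q_{*}$-logarithm form of $I_{\beta,q}[f;\theta]$ in (\ref{eq:GeneralizedFisher}), I would invoke the identity (\ref{eq:relationLnq}) with $a=g(x;\theta)$ and $b=f(x;\theta)/M_q[f;\theta]^{1/q}$, for which $b^{q}=a$, to get $\partial_\theta\ln_{q_*}\!\big(f/M_q[f;\theta]^{1/q}\big)=\frac1q\,\partial_\theta\ln_{\bar q}(g)$. Using $M_q[f;\theta]^{1/q}=N_q[g;\theta]^{-1}$, the prefactor $q/M_q[f;\theta]^{1/q}$ equals $q\,N_q[g;\theta]$; the two powers of $q$ then cancel, and converting the $f$-expectation into the $\bar q$-expectation with respect to $g$ yields exactly $\bar{I}_{\beta,\bar q}[g;\theta]=N_q[g;\theta]^{\beta}\,E_{\bar q}\big[|\partial_\theta\ln_{\bar q}(g)|^{\beta}\big]$. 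The equivalent first expression in (\ref{eq:GeneralizedFisher-1}), with $g^{1-\bar q}\,\partial_\theta\ln g$ inside, then follows from the derivative rule for $\ln_{\bar q}$ recorded in (\ref{eq:derivLnq}).

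Finally I would transcribe the equality condition. Applying the same two substitutions to (\ref{eq:CaseOfEqualityInCR}), its left-hand side becomes $N_q[g;\theta]\,\partial_\theta\ln_{\bar q}(g)$; dividing by the positive constant $N_q[g;\theta]$ and absorbing it into $c(\theta)$ gives (\ref{eq:CaseofEqualityInCR2}). All hypotheses of Theorem~\ref{prop:[Generalized-Cram=0000E9r-Rao-inequali} are inherited: finiteness of $M_q[f;\theta]$ makes $g$ well defined and $N_q[g;\theta]=M_q[f;\theta]^{-\bar q}$ finite and positive, and measurability, absolute continuity in $\theta$ and local integrability of the derivative pass from $f$ to $g$ because $g$ is a fixed power of $f$ up to the factor $M_q[f;\theta]$, which is itself absolutely continuous in $\theta$. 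I do not expect a genuine obstacle here beyond bookkeeping; the only point needing a moment of care is tracking the powers of $M_q[f;\theta]$ versus $N_q[g;\theta]$ and the cancellation of the factor $q$ in the Fisher-information term.
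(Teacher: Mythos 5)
Your proposal is correct and follows essentially the same route as the paper, which treats the corollary as a direct rewriting of inequality (\ref{eq:GeneralizedCramerRao}) in terms of the escort $g$, using the duality $f=g^{\bar q}/N_q[g;\theta]$ with $N_q[g;\theta]=M_q[f;\theta]^{-\bar q}$ and the identity (\ref{eq:relationLnq}); your computation of the cancellation of the factor $q$ in the Fisher-information term and the absorption of $N_q[g;\theta]$ into $c(\theta)$ is exactly the bookkeeping the paper leaves implicit.
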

Note that this is simply a rewriting of the initial extended expression
of extended Cramér-Rao inequality (\ref{eq:GeneralizedCramerRao})
in terms of the escort $g(x;\theta)$ of $f(x,\theta)$. The generalized
Fisher information $\bar{I}_{\beta,\bar{q}}[g]$ is the same as $I_{\beta,q}[f;\theta],$
up to the rewriting in terms of $g$. The second Cramér-Rao inequality
in (\ref{eq:GeneralizedCramerRao-1}) is nice because it exhibits
a fundamental estimation bound for a $q$-moment on the estimation
error, thus making a bridge between concepts in estimation theory
and the tools of nonextensive thermostatistics. What we learn from
this result is the fact that for all estimators with a given bias,
the best estimator that minimizes the $q$-moment of the error is
lower bounded by the inverse of the (generalized) Fisher information.

We shall also discuss in some more details the case of equality in
the two Cramér-Rao inequality. It appears that the general solution
that saturates the bounds is in the form of a deformed $q$-exponential. 

Consider the conditions of equality (\ref{eq:CaseOfEqualityInCR})
and (\ref{eq:CaseofEquality}) in the two Cramér-Rao inequalities.
In the first case, we have that the distribution which attains the
bound shall satisfy
\begin{equation}
\frac{\partial}{\partial\theta}\ln_{q*}\left(\frac{f(x;\theta)}{M_{q}[g;\theta]^{\frac{1}{q}}}\right)=c(\theta)\,\text{sign}\left(\hat{\theta}(x)-\theta\right)\left|\hat{\theta}(x)-\theta\right|^{\alpha-1},\label{eq:CaseofEquality}
\end{equation}
where $c(\theta)$ is a positive function. The general solution of
this differential equation has the form
\begin{equation}
f(x;\theta)\propto\exp_{q_{*}}\left(\int_{\Theta}c(\theta)\,\text{sign}\left(\hat{\theta}(x)-\theta\right)\left|\hat{\theta}(x)-\theta\right|^{\alpha-1}\text{d}\theta\right).\label{eq:SolGeneraleExpq}
\end{equation}
Similarly, in the second Cramér-Rao inequality, we get that
\begin{equation}
g(x;\theta)\propto\exp_{\bar{q}}\left(\int_{\Theta}c(\theta)\,\text{sign}\left(\hat{\theta}(x)-\theta\right)\left|\hat{\theta}(x)-\theta\right|^{\alpha-1}\text{d}\theta\right),\label{eq:SolGeneraleExpq2}
\end{equation}
which is the escort of $f(x;\theta).$

\subsection{Yet another pair of Cramér-Rao inequalities}

It is quite immediate to extend the Cramér-Rao inequalities above
to an even broader context: let us consider a general pair of escort
distributions linked by say $g=\phi(f)$, with $\phi:\mathbb{R}^+\rightarrow\mathbb{R}^+$ monotone increasing, and $f=\phi^{-1}(g)=\psi(g)$.
Denote $E_{\phi}$ and $E_{\psi}$ the corresponding expectations, e.g. $E_{\phi}\left[|x|^\alpha\right]=\int_X |x|^\alpha \phi\left(f(x)\right)\text{d}x$. 
Following the very same steps as in Theorem \ref{prop:[Generalized-Cram=0000E9r-Rao-inequali},
we readily arrive at
\begin{gather}
E\left[\left|\hat{\theta}(x)-\theta\right|^{\alpha}\right]^{\frac{1}{\alpha}}I_{\beta,\phi}[f;\theta]^{\frac{1}{\beta}}\geq\left|1+\frac{\partial}{\partial\theta}E_{\phi}\left[\hat{\theta}(x)-\theta\right]\right|\label{eq:GeneralizedCramerRao-3}
\end{gather}
where 
\begin{equation}
I_{\beta,\phi}[f;\theta]=\int_{X}f(x;\theta) \left|\frac{{\partial\phi(f)}/{\partial\theta}}{f(x;\theta)}\right|^{\beta}\text{d}x=E\left[\left|\frac{{\partial\phi(f)}/{\partial\theta}}{f(x;\theta)}\right|^{\beta}\right].\label{eq:GeneralizedFisher-3}
\end{equation}
Then, the analog of corollary\,\ref{cor2} takes the form 
\begin{equation}
E_{\psi}\left[\left|\hat{\theta}(x)-\theta\right|^{\alpha}\right]^{\frac{1}{\alpha}}\bar{I}_{\beta,\psi}[g;\theta]^{\frac{1}{\beta}}\geq\left|1+\frac{\partial}{\partial\theta}E\left[\hat{\theta}(x)-\theta\right]\right|,\label{eq:GeneralizedCramerRaoNaudts}
\end{equation}
with
\begin{equation}
\bar{I}_{\beta,\psi}[g;\theta]=\int_{X}\psi(g)\left|\frac{\partial g/\partial\theta}{\psi(g)}\right|^{\beta}\text{d}x=E_{\psi}\left[\left|\frac{\partial}{\partial\theta}\ln_{\psi}\left(g\right)\right|^{\beta}\right],\label{eq:GeneralizedFisherNaudts}
\end{equation}
where the function $\ln_{\psi}(u)$ is defined by $\ln_{\psi}(u):=\int_{1}^{u}\frac{1}{\psi(x)}\text{d}x$,
and where the equality case in (\ref{eq:GeneralizedCramerRaoNaudts})
occurs if and only if $g(x;\theta)\propto\exp_{\psi}\int_{\Theta}k(\theta)\left|\hat{\theta}(x)-\theta\right|^{\alpha-1}\text{d}\theta,$
with $\exp_{\psi}$ the inverse function of $\ln_{\psi}.$ During
the writing of this paper, we realized  that a result similar to
(\ref{eq:GeneralizedCramerRaoNaudts}), though obtained using a different
approach and with slightly different notations, has been given in
a deep paper by Naudts \cite{naudts_estimators_2004}. In this interesting
work, the author studied general escort distributions and introduced,
in particular, the notion of $\psi$-exponential families.

\section{\label{sec:Inequalities-in-the}Inequalities in the case of a translation
family}

In the particular case of a translation parameter, our $q$-Cramér-Rao
inequalities reduce to two interesting inequalities that characterize
the $q$-Gaussian distributions. 

Let $\theta$ be a location parameter, and define by $f(x;\theta)$
the family of density $f(x;\theta)=f(x-\theta)$. In this case, we
have that $\frac{\partial}{\partial\theta}f(x;\theta)=-\frac{\mathrm{d} ~}{\mathrm{d} x}f(x-\theta),$
and the Fisher information becomes a characteristic of the information
in the distribution. 

Let us denote by $\mu_{q}$ the $q$-mean of $f(x),$ that is of the
escort distribution $g(x)$ associated with $f(x).$ We immediately
have that the $q$-mean of $f(x;\theta)$ is $(\mu_{q}+\theta)$.
Thus, the estimator $\hat{\theta}(x)=x-\mu_{q}$ is a $q$-unbiased
estimator of $\theta$, since $E_{q}\left[\hat{\theta}(x)-\theta\right]=0$.
Similarly, if we choose $\hat{\theta}(x)=x$, the estimator will be
biased, $E_{q}\left[\hat{\theta}(x)-\theta\right]=\mu_{q}$, but independent
of $\theta$, so that the derivative of the bias with respect to $\theta$
is zero. Finally, let us observe that for a translation family, the
information generating function $M_{q}[f;\theta]=M_{q}[f]$ is independent
of the parameter $\theta$.

\subsection{Cramér-Rao characterizations of generalized $q$-Gaussian distributions}

These simple observations can be applied directly to our two Cramér-Rao
inequalities (\ref{eq:GeneralizedCramerRao}) and (\ref{eq:GeneralizedCramerRao-1}).
This is stated in the two following corollaries. 
\begin{cor}
\label{cor3}{[}Generalized $q$-Cramér-Rao inequality{]} - Let $f(x)$
be a univariate probability density function defined over a subset
$X$ of $\mathbb{R}$. Assume that $f(x)$ is a measurable function
of $x$, is integrable with respect to $x$. Assume also that $q>0$
and that $M_{q}[f]$ is finite. The following generalized Cramér-Rao
inequality then holds: 
\begin{gather}
E\left[\left|x\right|^{\alpha}\right]^{\frac{1}{\alpha}}I_{\beta,q}[f]^{\frac{1}{\beta}}\geq1\label{eq:GeneralizedCramerRao-2}
\end{gather}
with $\alpha$ and $\beta$ Hölder conjugates of each other, i.e.
$\alpha^{-1}+\beta^{-1}=1,$ $\alpha\geq1$, and where the quantity
\begin{equation}
I_{\beta,q}[f]=E\left[\left|\frac{q}{M_{q}[f]}\, f(x)^{q-1}\frac{\mathrm{d} ~}{\mathrm{d} x}\ln\left(f(x)\right)\right|^{\beta}\right]=\left(\frac{q}{M_{q}[f]}\right)^{\beta}E\left[\left|\frac{\mathrm{d} ~}{\mathrm{d} x}\ln_{q*}\left(f(x)\right)\right|^{\beta}\right]\label{eq:GeneralizedFisher-2}
\end{equation}
is the generalized Fisher information of order $(\beta,q)$ of the
distribution. The equality case is obtained if 
\begin{equation}
f(x)\propto\exp_{q_{*}}\left(-\gamma\left|x\right|^{\alpha}\right),\mathrm{\,\, with}\,\,\gamma>0.\label{eq:CaseOfEqualityInCR-1}
\end{equation}
\end{cor}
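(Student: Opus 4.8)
The plan is to deduce Corollary~\ref{cor3} from Theorem~\ref{prop:[Generalized-Cram=0000E9r-Rao-inequali} by specializing to the translation family $f(x;\theta)=f(x-\theta)$ and invoking the three remarks collected just above the corollary: the $q$-mean of $f(x;\theta)$ equals $\mu_q+\theta$; the generating function $M_q[f;\theta]=M_q[f]$ is independent of $\theta$; and $\frac{\partial}{\partial\theta}f(x;\theta)=-\frac{\mathrm d}{\mathrm dx}f(x-\theta)$. The $\theta$-regularity demanded by the theorem (joint measurability, absolute continuity, local integrability of the $\theta$-derivative) is precisely the $x$-regularity of $f$ that is implicit in writing $I_{\beta,q}[f]$, so the theorem applies. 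First I would take the estimator $\hat\theta(x)=x$: then $E_q[\hat\theta(x)-\theta]=\mu_q$ is independent of $\theta$, hence $\frac{\partial}{\partial\theta}E_q[\hat\theta(x)-\theta]=0$ and the right-hand side of (\ref{eq:GeneralizedCramerRao}) collapses to $1$, producing the shape of (\ref{eq:GeneralizedCramerRao-2}).

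It then remains to identify the two left-hand factors as intrinsic quantities of the distribution. For the moment factor, the substitution $y=x-\theta$ gives $E\left[\left|\hat\theta(x)-\theta\right|^{\alpha}\right]=\int_X|y|^{\alpha}f(y)\,\mathrm dy=E\left[|x|^{\alpha}\right]$. For the Fisher-information factor, constancy of $M_q[f;\theta]$ removes the $\frac{\partial}{\partial\theta}\ln M_q$ contribution inside $I_{\beta,q}[f;\theta]$ in (\ref{eq:GeneralizedFisher}) and $\frac{\partial}{\partial\theta}\ln f(x;\theta)=-\frac{\mathrm d}{\mathrm dx}\ln f(x-\theta)$; the same substitution $y=x-\theta$ turns (\ref{eq:GeneralizedFisher}) into (\ref{eq:GeneralizedFisher-2}), and the equivalent $\ln_{q*}$ expression follows from the first identity in (\ref{eq:derivLnq}), $f^{q-1}\frac{\mathrm d}{\mathrm dx}\ln f=\frac{\mathrm d}{\mathrm dx}\ln_{q*}f$, after pulling the constant $q/M_q[f]$ out of the $\beta$-norm.

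For the equality case, I would substitute $\hat\theta(x)=x$ into the equality condition (\ref{eq:CaseOfEqualityInCR}). Because $f(x;\theta)=f(x-\theta)$ and $M_q[f]$ is constant, the left-hand side of (\ref{eq:CaseOfEqualityInCR}) equals $-\frac{q}{M_q[f]^{1/q}}\frac{\mathrm d}{\mathrm du}\ln_{q*}\left(f(u)/M_q[f]\right)$ evaluated at $u=x-\theta$, a function of $u$ alone, whereas the right-hand side is $c(\theta)$ times $\mathrm{sign}(u)\,|u|^{\alpha-1}$; since this must hold for all admissible $(x,\theta)$ and $\mathrm{sign}(u)|u|^{\alpha-1}$ does not vanish identically, $c(\theta)$ is forced to be a constant $c>0$, and one integration in $u$ gives $\frac{q}{M_q[f]^{1/q}}\ln_{q*}\left(f(u)/M_q[f]\right)=-\frac{c}{\alpha}|u|^{\alpha}+\mathrm{const}$. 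Exponentiating and absorbing the additive constant into the normalization — harmless because shifting the argument of $\exp_{q*}$ by a constant only rescales that argument and introduces an overall factor, exactly as in the passage leading from (\ref{eq:SolGeneraleExpq}) to (\ref{eq:CaseOfEqualityInCR-1}) — yields $f(x)\propto\exp_{q*}\left(-\gamma|x|^{\alpha}\right)$ with $\gamma=cM_q[f]^{1/q}/(q\alpha)>0$, i.e.\ (\ref{eq:CaseOfEqualityInCR-1}). The main obstacle, modest though it is, sits in this last paragraph: justifying that $c(\theta)$ must be constant and that the constant of integration may be discarded; everything else is a routine transcription of Theorem~\ref{prop:[Generalized-Cram=0000E9r-Rao-inequali}.
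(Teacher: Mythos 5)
Your proof is correct and follows essentially the same route as the paper's: specialize Theorem~\ref{prop:[Generalized-Cram=0000E9r-Rao-inequali} to the translation family with $\hat\theta(x)=x$ (so the derivative of the $q$-bias vanishes and the right-hand side collapses to $1$), then obtain the equality case by replacing $\partial/\partial\theta$ with $-\mathrm{d}/\mathrm{d}x$ and integrating. The paper states this in two lines; your extra care about the constancy of $c(\theta)$ and the absorption of the integration constant merely fills in details the paper leaves implicit (note only that absorbing the additive constant rescales $\gamma$ through the $q$-product identity, so your explicit formula for $\gamma$ is not exact, but the corollary claims only some $\gamma>0$).
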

\begin{proof}
This is a direct consequence of (\ref{eq:GeneralizedCramerRao}),
with $\hat{\theta}(x)=x$ and $\theta=0.$ The case of equality is
obtained by integration and simplifications of (\ref{eq:CaseofEquality}), where the derivative
with respect to $\theta$ is replaced by the derivative with respect
to $x;$ with $\frac{\partial}{\partial\theta}f(x;\theta)=-\frac{\mathrm{d} ~}{\mathrm{d} x}f(x-\theta).$\end{proof}
\begin{cor}
\label{cor4}{[}generalized escort-q-Cramér-Rao inequality{]} - Under
the same hypotheses as in Theorem\,\ref{prop:[Generalized-Cram=0000E9r-Rao-inequali},
we have
\begin{gather}
E_{\bar{q}}\left[\left|x\right|^{\alpha}\right]^{\frac{1}{\alpha}}\bar{I}_{\beta,\bar{q}}[g]^{\frac{1}{\beta}}\geq1\label{eq:GeneralizedCramerRao-1-1}
\end{gather}
with $\alpha$ and $\beta$ Hölder conjugates of each other, i.e.
$\alpha^{-1}+\beta^{-1}=1,$ $\alpha\geq1$, and where the quantity
\begin{equation}
\bar{I}_{\beta,\bar{q}}[g]=\left(N_{q}[g]\right)^{\beta}\, E_{\bar{q}}\left[\left|g(x)^{1-\bar{q}}\,\frac{\mathrm{d} ~}{\mathrm{d} x}\ln\left(g\right)\right|^{\beta}\right]=\left(N_{q}[g]\right)^{\beta}E_{\bar{q}}\left[\left|\frac{\mathrm{d} ~}{\mathrm{d} x}\ln_{\bar{q}}\left(g(x\right)\right|^{\beta}\right]\label{eq:GeneralizedFisher-1-1}
\end{equation}
is the generalized Fisher information of order $(\beta,q)$ of the
distribution $g.$ The equality case is obtained if and only if 
\begin{equation}
g(x)\propto\exp_{\bar{q}}\left(-\gamma\left|x\right|^{\alpha}\right),\text{ with }\gamma>0.
\end{equation}

\end{cor}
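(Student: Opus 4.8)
The plan is to obtain Corollary~\ref{cor4} as a direct specialization of Corollary~\ref{cor2} to a translation family, in exact parallel with the way Corollary~\ref{cor3} follows from Theorem~\ref{prop:[Generalized-Cram=0000E9r-Rao-inequali}. First I would set $f(x;\theta)=f(x-\theta)$. Since the information generating function $M_{q}[f;\theta]=M_{q}[f]$ is translation invariant (as already observed), the escort of the translated density is the translate $g(x;\theta)=g(x-\theta)$ of the escort $g$ of $f$, and likewise $N_{q}[g;\theta]=N_{q}[g]$ does not depend on $\theta$. I would then apply (\ref{eq:GeneralizedCramerRao-1}) with the estimator $\hat{\theta}(x)=x$ evaluated at $\theta=0$.

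With this choice the left-hand side of (\ref{eq:GeneralizedCramerRao-1}) becomes $E_{\bar{q}}\!\left[|x|^{\alpha}\right]^{1/\alpha}\bar{I}_{\beta,\bar{q}}[g;0]^{1/\beta}$; using $\partial_{\theta}g(x;\theta)\big|_{\theta=0}=-\frac{\mathrm{d}}{\mathrm{d}x}g(x)$ together with the $\theta$-independence of $N_{q}[g;\theta]$, the quantity $\bar{I}_{\beta,\bar{q}}[g;0]$ is exactly $\bar{I}_{\beta,\bar{q}}[g]$ in the form (\ref{eq:GeneralizedFisher-1-1}), the two equivalent expressions being identified via the derivative rule (\ref{eq:derivLnq}) for $\ln_{\bar{q}}$. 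For the right-hand side, the standard expectation appearing there is the expectation with respect to $g$, i.e. the $q$-mean of $f$; for the translation family $E\left[\hat{\theta}(x)-\theta\right]=\mu_{q}+\theta-\theta=\mu_{q}$ is independent of $\theta$, so $\frac{\partial}{\partial\theta}E\left[\hat{\theta}(x)-\theta\right]=0$ and the bound reduces to $1$. This yields (\ref{eq:GeneralizedCramerRao-1-1}). Equivalently, one may bypass the estimation-theoretic route and derive (\ref{eq:GeneralizedCramerRao-1-1}) from Corollary~\ref{cor3} by the escort duality $f=g^{\bar{q}}/N_{q}[g]$, using that $E[\,\cdot\,]$ with respect to $f$ is $E_{\bar{q}}[\,\cdot\,]$ with respect to $g$ and the identity (\ref{eq:relationLnq}), which turns $I_{\beta,q}[f]$ into $\bar{I}_{\beta,\bar{q}}[g]$.

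It remains to identify the equality case. Specializing the equality condition (\ref{eq:CaseofEqualityInCR2}) of Corollary~\ref{cor2} to $\hat{\theta}(x)=x$ and replacing $\partial_{\theta}$ by $-\mathrm{d}/\mathrm{d}x$ for the translation family, one obtains at $\theta=0$ the ordinary differential equation $-\frac{\mathrm{d}}{\mathrm{d}x}\ln_{\bar{q}}\!\big(g(x)\big)=c(0)\,\mathrm{sign}(x)\,|x|^{\alpha-1}$ with $c(0)>0$; integrating and using that $\exp_{\bar{q}}$ is the inverse of $\ln_{\bar{q}}$ gives $g(x)\propto\exp_{\bar{q}}\!\big(-\gamma|x|^{\alpha}\big)$ with $\gamma=c(0)/\alpha>0$, which is the general solution already recorded in (\ref{eq:SolGeneraleExpq2}). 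For the converse, I would substitute a $\bar{q}$-Gaussian $g(x)\propto\exp_{\bar{q}}(-\gamma|x|^{\alpha})$ back into the escort version of (\ref{eq:UnPlusdB}) and check that the H\"older inequality is then an equality and the relevant integrand nonnegative, so the bound is attained.

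I do not expect a genuine obstacle, since the statement is a specialization of an inequality already established. The only point requiring care is the bookkeeping in the equality case: correctly converting the $\theta$-derivative condition into an ODE in $x$, fixing the sign of $\gamma$, and verifying the converse, that the properly normalized $\bar{q}$-Gaussian (not merely a density proportional to it) actually saturates the inequality. One should also note that the hypotheses of Corollary~\ref{cor2} ($q>0$, $M_{q}[f]<\infty$, measurability and integrability of $f$) are inherited verbatim in the translation case, as translation preserves all of them.
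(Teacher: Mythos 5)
Your proposal is correct and follows essentially the same route as the paper: Corollary~\ref{cor4} is obtained by specializing the escort inequality (\ref{eq:GeneralizedCramerRao-1}) to a translation family with $\hat{\theta}(x)=x$ and $\theta=0$, using the $\theta$-independence of $N_{q}[g;\theta]$ and of the $q$-bias to reduce the right-hand side to $1$, and integrating the equality condition (\ref{eq:CaseofEqualityInCR2}) with $\partial_{\theta}$ replaced by $-\mathrm{d}/\mathrm{d}x$ to obtain the generalized $\bar{q}$-Gaussian. Your added care about the converse direction of the equality claim and the alternative derivation via the escort duality with Corollary~\ref{cor3} are consistent with, and slightly more explicit than, the paper's treatment.
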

In these two cases, the general extended Cramér-Rao inequalities lead
to inequalities for the moments of the distribution, where the equality
is achieved for a generalization of the Gaussian distribution. 

Let us finally note that by the same reasoning as above, the general inequalities (\ref{eq:GeneralizedCramerRao-3}) and (\ref{eq:GeneralizedCramerRaoNaudts}) yield
\begin{equation}
E\left[|x|^\alpha\right]^{\frac{1}{\alpha}} E\left[  \left|\frac{\mathrm{d} ~}{\mathrm{d} x}\ln_{\psi}\left(\phi(f(x))\right)\right|^{\beta} \right]^{\frac{1}{\beta}} \geq 1
\end{equation}
with equality if and only if
\begin{equation}
g(x)=\phi(f(x))\propto\exp_{\psi}\left(-\gamma\left|x\right|^{\alpha}+k\right),\text{ with }\gamma>0 \text{  and } k \text{ a constant}.
\end{equation}

\subsection{Connections with earlier results}

In the case $q=1,$ the characterization result in Corollary\,\ref{cor3}
has first been given by Boekee \cite{boekee_extension_1977}, who
studied the generalized Fisher information $I_{\beta,1}[f]$ and gave
a Cramér-Rao inequality saturated by the generalized Gaussian $g(x)\propto\exp\left(-\gamma\left|x\right|^{\alpha}\right).$ 

It is also important to link our findings to a result by Lutwak \textit{et
al.} \cite{lutwak_Cramer_2005}. In that remarkable paper, the authors
defined a generalized Fisher information, which can be written as
\begin{equation}
\phi_{\beta,q}[f]=E\left[\left|\, f(x)^{q-1}\frac{\mathrm{d} ~}{\mathrm{d} x}\ln\left(f(x)\right)\right|^{\beta}\right]=E\left[\left|\frac{\mathrm{d} ~}{\mathrm{d} x}\ln_{q*}\left(f(x)\right)\right|^{\beta}\right]
\end{equation}
and is similar to our $I_{\beta,q}[f]$ in (\ref{eq:GeneralizedFisher-2}),
up to a factor$\left(\frac{q}{M_{q}[f]}\right)^{\beta}$. Then, they
established a general Cramér-Rao inequality in the form 
\begin{equation}
E\left[\left|x\right|^{\alpha}\right]^{\frac{1}{\alpha}}\phi_{\beta,q}[f]^{\frac{1}{\beta q}}\geq E_{G}\left[\left|x\right|^{\alpha}\right]^{\frac{1}{\alpha}}\phi_{\beta,q}[G]^{\frac{1}{\beta q}},\label{eq:LutwakCRInequality}
\end{equation}
 where $G$ is any generalized Gaussian as in (\ref{eq:qgauss_general}).
Actually, their result (obtained in a very different way), can be
seen as an improved version of (\ref{eq:GeneralizedCramerRao-2}).
Indeed, rewriting the inequality (\ref{eq:GeneralizedCramerRao-2})
in terms of $\phi_{\beta,q}[f],$ we have
\begin{equation}
E\left[\left|x\right|^{\alpha}\right]^{\frac{1}{\alpha}}\phi_{\beta,q}[f]^{\frac{1}{\beta}}\geq q^{-1}M_{q}[f].\label{eq:eqCRavecMq}
\end{equation}
Then, the inequality (\ref{eq:LutwakCRInequality}) can be obtained
by minimizing the lower bound in the right of (\ref{eq:eqCRavecMq}),
as is described in \cite{bercher_betaq-generalized_2012}. 

Similarly, the characterization result in Corollary\,\ref{cor4}
can be connected to a recent result by Furuichi \cite{furuichi_maximum_2009,furuichi_generalized_2010}
in the case $\alpha=\beta=2.$ In these very interesting works, the
author investigated Cramér-Rao inequalities involving $q$-expectations.
More precisely, he considered unnormalized escort distributions, that
is distributions $g(x)=f(x)^{q},$ and defined expectations $\tilde{E}_{q}[.]$
as the expectations computed with respect to these unnormalized escort.
He defined a generalized Fisher information which is essentially the
same as our Fisher information (\ref{eq:GeneralizedFisher-1-1}) --
although it is written in terms of unnormalized $q$-expectation.
Then, he derived a Cramér-Rao inequality \cite[Theorem 1]{furuichi_generalized_2010},
\cite[Theorem 4.1]{furuichi_maximum_2009}, with its case of equality.
This inequality can be recovered at once from (\ref{eq:GeneralizedCramerRao-1-1}),
which is rewritten below in a developed form

\begin{equation}
\left(\int_{X}\frac{g(x)^{\bar{q}}}{N_{q}[g]}\,\left|x\right|^{\alpha}\text{d}x\right)^{\frac{1}{\alpha}}\times N_{q}[g]\left(\int_{X}\frac{g(x)^{\bar{q}}}{N_{q}[g]}\,\left|\frac{\mathrm{d} ~}{\mathrm{d} x}\ln_{\bar{q}}\left(g(x\right)\right|^{\beta}\text{d}x\right)^{\frac{1}{\beta}}\,\geq1.
\end{equation}
It suffices to simplify the normalizations $N_{q}[g],$ using the
fact that $\alpha^{-1}+\beta^{-1}=1$ to get

\begin{gather}
\tilde{E}_{\bar{q}}\left[\left|x\right|^{\alpha}\right]^{\frac{1}{\alpha}}\tilde{E}_{\bar{q}}\left[\left|\frac{\mathrm{d} ~}{\mathrm{d} x}\ln_{\bar{q}}\left(g(x\right)\right|^{\beta}\right]\geq1,\label{eq:GeneralizedCramerRaoFuruichi}
\end{gather}
recovering Furuichi's definition of generalized Fisher information
and the associated Cramér-Rao inequality, with equality if and only
if $g(x)\propto\exp_{\bar{q}}(-\gamma |x|^{\alpha}).$

\section{\label{sec:Further-remarks}Further remarks}

In this section, we add some further comments on two possible estimation
procedures that can be derived by examination of the condition of
equality in the $q$-Cramér-Rao inequalities.

\subsection{Maximum escort likelihood}

Let us first return to the case of equality in the generalized $q$-Cramér-Rao
inequalities. For the second Cramér-Rao inequality, the condition
(\ref{eq:CaseofEqualityInCR2}) is 
\begin{equation}
\frac{\partial}{\partial\theta}\ln_{\bar{q}}\left(g(x;\theta)\right)=c(\theta)\,\text{sign}\left(\hat{\theta}(x)-\theta\right)\left|\hat{\theta}(x)-\theta\right|^{\alpha-1}.\label{eq:ConditionEquality}
\end{equation}
Thus, we see that if the bound is attained (the estimator could then
be termed ``$q$-efficient''), then this suggests to look for the
parameter that maximizes the escort distribution of the likelihood:
\begin{equation}
\hat{\theta}_{MEL}=\arg\max_{\theta}\, g(x;\theta)=\arg\max_{\theta}\frac{f(x;\theta)^{q}}{M_{q}[f(x;\theta)]},\label{eq:MaximumEscort}
\end{equation}
where MEL stands for ``maximum escort likelihood''. Indeed, in these
conditions, we have that the derivative in the left of (\ref{eq:ConditionEquality})
is zero, and thus that
\begin{equation}
\left.\frac{\partial}{\partial\theta}\ln_{\bar{q}}\left(g(x;\theta)\right)\right|_{\theta=\hat{\theta}_{MEL}}=\left.c(\theta)\,\text{sign}\left(\hat{\theta}(x)-\theta\right)\left|\hat{\theta}(x)-\theta\right|^{\alpha-1}\right|_{\theta=\hat{\theta}_{MEL}}=0.\label{eq:DerivateConditionEquality}
\end{equation}
Therefore, we get from the equality in the right side that $\hat{\theta}(x)=\hat{\theta}_{MEL}.$
Hence, we see that if it exists a $q$-efficient estimator, it is
the estimator defined by the maximum of the escort of the likelihood.
Of course, we recover the standard maximum likelihood estimator in
the $q=1$ case. The analysis of the properties of this estimator
will be the subject of future efforts.

\subsection{Maximum Lq-likelihood estimation}

We also saw that in case of equality, then the likelihood, or equivalently
its escort, must be under the form of a $q$-exponential:
\begin{equation}
f(x;\theta)\propto\exp_{q_{*}}\left(\int_{\Theta}c(\theta)\,\text{sign}\left(\hat{\theta}(x)-\theta\right)\left|\hat{\theta}(x)-\theta\right|^{\alpha-1}\text{d}\theta\right).
\end{equation}
 Actually, it seems that there is only some very particular cases
where this could occur. For instance, if the measurements consists
in a series of independent and identically distributed observations
$x_{i},$ then $f(x;\theta)=\Pi_{i}$$f(x_{i};\theta)$ and one would
have to find a distribution $f(x_{i};\theta)$ such that the product
$f(x;\theta)$ writes as a $q_{*}$-exponential. A possible amendment
to the formulation can be to consider a $q_{*}$-product of the densities
$f(x_{i};\theta)$ instead of the standard product, and define $f^{(q_{*})}(x;\theta)=\bigotimes_{q_{*},i}f(x_{i};\theta).$
Such $q$-likelihood has already been considered by \cite{suyari_law_2005}.
Here, the Cramér-Rao inequality still applies for the $q_{*}$-likelihood
$f^{(q_{*})}(x;\theta)$, and the equality is obtained if $f^{(q_{*})}(x;\theta)$
is a $q_{*}$-exponential. By the properties (\ref{eq:qproductidentities})
of the $q$-product, we see that the individual densities $f(x_{i};\theta)$
must be $q_{*}$-exponentials. Similarly, we see that the escort-likelihood
will have the form of $q$-exponential if we use the $q$-product
of the escort densities: $g^{(\bar{q})}(x;\theta)=\bigotimes_{\bar{q},i}\, g(x_{i};\theta).$ 

The equality condition (\ref{eq:CaseofEqualityInCR2}), applied to
the $\bar{q}$ escort-likelihood $g^{(\bar{q})}(x;\theta)$ then suggests
to define the estimator as the maximizer of the $\bar{q}$ escort-likelihood,
or, equivalently, as the maximizer of the $\ln_{\bar{q}}$ escort-likelihood:
\begin{equation}
\hat{\theta}_{MLq}=\arg\max_{\theta}\ln_{\bar{q}}\left(g^{(\bar{q})}(x;\theta)\right)=\arg\max_{\theta}\,\,\left(\sum_{i}\ln_{\bar{q}}g(x_{i};\theta)\right).\label{eq:38}
\end{equation}
Actually, the rule defined by (\ref{eq:38}) has been proposed and
studied in the literature. It has been introduced by Ferrari \cite{ferrari_maximum_2010,ferrari_robust_2011},
and independently by Hasegawa \cite{hasegawa_properties_2009}. As a matter
of fact, the first authors, defining a problem as in (\ref{eq:38})
with data distributed according to $f(x;\theta)$, showed that the
distribution $g(x;\theta)$ must be the escort of $f(x;\theta).$
These authors have shown that (\ref{eq:38}) yields a robust estimator
with a tuning parameter, $q,$ which balances efficiency and robustness.
When the number of data increases, then the estimator appears to be
the empirical version of
\begin{equation}
\hat{\theta}_{MLq}=\arg\min_{\theta}\,\,-E\left[\ln_{\bar{q}}g(x_{i};\theta)\right]=\arg\min_{\theta}\frac{1}{N_{q}[g;\theta]}\,\frac{1}{(1-\bar{q})}\left(\int g(x;\theta)^{\bar{q}}\text{d}x-1\right),
\end{equation}
which is nothing but the normalized Tsallis entropy attached to the
escort distribution $g$. Such links between maximum likelihood and
the minimization of the entropy with respect to the parameter $\theta$
can be traced back to Akaike in \cite{akaike_information_1973}. Here, this
gives a direct interpretation of the MLq method as an approximate
minimum entropy procedure, and highlights the particular role of escort
distributions in this context. Interestingly, it is shown in \cite{ferrari_maximum_2010}
and \cite{hasegawa_properties_2009} that the asymptotical behaviour
of the estimator is governed by a generalized Fisher information similar
to (\ref{eq:GeneralizedFisher-1}). Our findings add the fact that
the MLq estimator satisfies the Cramér-Rao inequality (\ref{eq:GeneralizedCramerRao-1}),
for the product distribution $g^{(\bar{q})}(x;\theta).$

\section{Conclusions}

The generalized $q$-Gaussians form an important and versatile family
of probability distributions. These generalized $q$-Gaussians, which
appear in physical problems as well as in functional inequalities,
are the maximum entropy distributions associated with Tsallis or Rényi
entropy. In this paper, we have shown that the generalized $q$-Gaussians
are also the minimizers of extended versions of the Fisher information,
over all distributions with a given moment, just as the standard
Gaussian minimizes Fisher information over all distributions with
a given variance. Actually, we obtain more precise results in the
form of extended versions of the standard Cramér-Rao inequality, which
are saturated by the generalized $q$-Gaussians. These Cramér-Rao
inequalities, and the associated generalized Fisher informations,
recover, put in perspective and connect earlier results by Lutwak
et al. \cite{lutwak_Cramer_2005} and Furuichi \cite{furuichi_maximum_2009,furuichi_generalized_2010}. 

As a matter of fact, these characterizations of the generalized $q$-Gaussians
appear as simple consequences of more general extended Cramér-Rao
inequalities obtained in the context of estimation theory. Indeed,
considering moments of any order of the estimation error, and using
statistical expectations with respect to an escort distribution, we
have derived two general Cramér-Rao inequalities that still include
the Barakin-Vajda as well as the standard Cramér-Rao inequality as
particular cases. This gives rise to general definitions of generalized
Fisher information, which reduce to the standard one as a particular
case, and make sense in this context. We have also characterized the
case of equality and shown that the lower bounds of the inequalities
can be attained if the parametric density belongs to a $q$-exponential
family. Finally, we have indicated that these findings suggest some
new estimation procedures, recovering in particular a recent Maximum
L$_{q}$-likelihood procedure. 

These results have been derived and presented in the monodimensional
case. An important point will be to extend these results to the multidimensional
case. This would be important both for the estimation inequalities
as well as for the Cramér-Rao inequalities characterizing the generalized
$q$-Gaussians. As far as the latter point is concerned, some results
are already available in \cite{bercher_betaq-generalized_2012},
and should be connected to estimation problems. As is well-known,
the Weyl-Heisenberg uncertainty principle in statistical physics is
nothing but the standard Cramér-Rao inequality for the location parameter.
Thus it would be of particular interest to investigate on the possible
meanings of the uncertainty relationships that could be associated
to the extended Cramér-Rao inequalities. Fisher information,
Cramér-Rao planes have been identified as useful and versatile tools
for characterizing complex systems, see e.g. \cite{dehesa_Cramer-Rao_2006,dehesa_fisher_2006,dehesa_generalized_????},
and it would be therefore interesting to look at the potential benefits
of using the extended versions in such problems. An open issue is
the possible convexity property of the generalized Fisher information.
Indeed, it is known that the standard Fisher information, as well
as the generalized versions with $q=1$, are convex functions of the
density. If this were also true for any value of $q$, then it would
be possible to associate to the generalized Fisher information a statistical
mechanics with the standard Legendre structure and with the $q$-Gaussian
as canonical distribution. Finally, future work shall also examine
the estimation rules suggested by our setting and study their statistical
properties.

\section*{References}


\end{document}